\documentclass[]{fairmeta}

\usepackage{MyPreamble}

\newcommand\OursFull{Sparse Co-Engagement \underline Graph Schema for \underline{G}enerative \underline{Rec}ommendation}
\newcommand\Ours{G2Rec}
\newcommand\Metas{Meta's}
\newcommand\AtMeta{ at Meta}

\title{Structuring and Tokenizing Distributed User Interest Context for Generative Recommendation}

\author[1,*]{Ruizhong Qiu}
\author[2]{Yinglong Xia}
\author[2]{Dongqi Fu}
\author[2]{Hanqing Zeng}
\author[2]{Ren Chen}
\author[2]{Xiangjun Fan}
\author[2]{Hong Li}
\author[2]{Hong Yan}
\author[1]{Hanghang Tong}

\affiliation[1]{University of Illinois Urbana--Champaign}
\affiliation[2]{Meta MRS}

\contribution[*]{Work done during an internship at Meta}

\abstract{
Generative recommendation is an emerging paradigm that has shown promise in industrial recommendation systems, aiming to predict users' next interactions from their historical behaviors.
At the core of generative recommendation lies item tokenization, which bridges item semantics and recommendation models.
However, existing methods often struggle to effectively \textit{organize} and \textit{inject} complex user-behavioral and item-semantic contexts into recommendation models simultaneously.
On the one hand, existing graph-based integration methods, such as graph serialization and graph neural networks, either suffer from scalability issues or exploit only local graph information.
On the other hand, existing semantic tokenization methods typically rely on heuristics and lack explicit supervision signals, which may lead to inaccurate or suboptimal semantic representations.
To address these limitations in user interest context modeling, we propose \textbf{\Ours{}}, a scalable framework that unifies holistic graph-based user co-engagement modeling with semantic tokenization for industrial-scale generative recommendation.
\textbf{First}, we construct a sparsified \emph{item-item co-engagement graph} with $\OP O(M\log M)$ edges as the item schema, where $M$ denotes the total number of interactions.
\textbf{Second}, we design a scalable ``soft'' clustering algorithm with time complexity $\OP O(\rho M\log M)$ per iteration to extract distributed \emph{interest prototypes} from the constructed graph, where $\rho$ is a small constant representing the sparsity of the soft cluster membership distribution rather than a hard one-to-one assignment.
\textbf{Third}, based on the \emph{interest prototypes} and \emph{item interest profiles} extracted by soft clustering, we tokenize these profiles together with users' interested items to train a generative sequential recommendation model.
Overall, \Ours{} enables recommendation models to capture holistic and semantically grounded user interest prototypes without requiring ground-truth user interests, thereby providing more comprehensive and accurate modeling of user behavior contexts in industrial sequential recommendation.
Online deployment across product surfaces and extensive experiments on public datasets demonstrate the superiority of \Ours{} over existing methods.
}

\metadata[Keywords]{Generative Recommendation, Tokenization, Graph Clustering}

\begin{document}

\maketitle

\section{Introduction}

Generative recommendation (GR) \citep{geng2022recommendation,gao2023chat,zhu2023large,hou2024large,deng2025onerec} has emerged as a promising paradigm in the field of industrial recommender systems. By leveraging the autoregressive nature of user behavior, GR aims to predict the next interactions of users based on their historical contexts using large language model (LLM) architectures, thereby providing users with a more personalized and responsive experience. This approach has shown considerable potential in various scenarios, including e-commerce, online advertising, and content streaming, where understanding the sequential nature of user behavior is crucial for delivering relevant recommendations. By modeling the temporal dependencies between user interactions, GR seeks to capture the underlying preferences and interests that drive user behavior, ultimately leading to more accurate and effective recommendations.

At the core of GR is item tokenization to bridge item semantics and the GR model. However, existing GR methods often struggle to effectively incorporate complex user-behavioral and item-semantic contexts simultaneously into the recommendation model.
On the one hand, while there exist graph integration methods to leverage the relational information from user-item relational graphs, they suffer from scalability issues or only utilize local graph information. For instance, graph serialization methods transform the graph into a sequence, but the long serialization can incur expensive computation of LLMs. Graph neural networks (GNNs), on the other hand, typically perceive only a small subgraph for each user, but they cannot fully exploit the holistic graph information. These limitations hinder the effectiveness of graph-based methods in industry-scale generative sequential recommendation.
Another parallel line of work is \emph{semantic tokenization}, which aims to represent each item as a few semantic tokens. However, they typically rely on heuristic learning objectives and lack supervision signals for semantic tokens. Consequently, these methods may not be guaranteed to learn accurate semantic representations, leading to suboptimal performance in recommendation tasks.

To address these critical limitations in this complex context modeling, we propose \emph{\OursFull{}} (\textbf{\Ours{}}), a novel method that bridges holistic graph-based co-engagement modeling and semantic tokenization in industry-scale generative sequential recommendation.
Systematically, our method \Ours{} first constructs an item-item co-engagement graph, capturing the complex relationships between user behavior and items with empirical efficiency and theoretical accuracy.
We then design a ``soft'' clustering algorithm in \Ours{} to obtain item interest prototypes from the constructed graph, where each cluster represents an interest prototype, and each item serving as a node has the soft cluster membership distribution over clusters (but not a hard-assigned one-to-one exclusive membership).
Finally, based on the item interest profiles extracted from the soft clustering, \Ours{} tokenizes the item semantics along with the original user engagement sequence to train the generative sequential recommendation models.

Our proposed method \Ours{} offers several advantages over existing methods. Firstly, it provides a more comprehensive and accurate modeling of user behavior by incorporating both graph-based co-engagement contexts and semantic item representations. Secondly, our method is highly scalable, allowing for efficient processing of large-scale graphs and user sequences. Finally, the use of clustering algorithms ensures that the learned semantic tokens are accurate and meaningful, without relying on explicit labels or heuristics.
We conduct extensive experiments on public datasets and online deployment on product surfaces to evaluate the performance of G2Rec. The results demonstrate the superiority of our method over many existing generative recommendation methods, highlighting its potential for real-world applications in industry-scale generative recommendation.

\vspace{0.5em}\noindent\textbf{Main contributions.} We summarize the main contributions of this work as follows.
\begin{itemize}
\item\textbf{Sparse graph schema as co-engagement contexts.} We propose a sparsified item co-engagement graph as the item schema.  We show that sampling $\OP O(M\log M)$ co-engagements suffices to approximately preserve structural information (\THMref{preserv-lap}), where $M$ is the total number of interactions. This is substantially sparser than the original co-engagement graph of quadratic size $\OP O(M^2)$.
\item\textbf{Scalable soft graph clustering.} We propose a scalable soft graph clustering algorithm. It is based on our proposed differentiable modularity objective function, which can be computed in nearly linear time $\OP O(\rho M\log M)$ and can be accelerated by GPUs, where $\rho$ is a small sparsity constant.
\item\textbf{Clustering-based item interest profiling.} We propose (i) using the clusters of the item-item co-engagement graph as item interest prototypes  and (ii) leveraging the soft cluster membership of each item as the item interest profile for user behavioral information beyond item feature similarity.
\item\textbf{Interest profile tokenization for GR.} We introduce a novel representation of user interaction history sequences: alternating between items and item interest profiles, allowing the generative recommender to effectively learn user interest transition patterns.
\item\textbf{Strong online performance.} Large-scale online A/B tests on \Metas{} product surfaces show the superiority of \Ours{}. 
\end{itemize}

\section{Preliminaries}

\subsection{Notations}\label{ssec:prelim-notat}
Let $\mathcal{U}$ denote the complete set of users, where each user is represented as $u \in \mathcal{U}$. Similarly, let $\mathcal{I}$ denote the universe of items, with $i \in \mathcal{I}$ denoting a particular item. Each item $i\in\mathcal I$ has an embedding vector $x_i\in\BB R^d$, where $d$ denotes the embedding dimensionality. Let $X:=[x_1,\dots,x_{|\CAL I|}]^\top\in\BB R^{|\CAL I|\times d}$ denote the item embedding matrix. For each user $u\in\mathcal U$, we record the past interactions of the user $u$ as a sequence $\mathcal I_u=[i_1,\dots,i_N]$ of items, where $N$ denotes the number of items that user $u$ has interacted with in the past, and each $i_j\in\mathcal I$ $(j=1,\dots,N)$ denotes an item that user $u$ has interacted with. We call sequence $\CAL I_u$ the \emph{interaction history} of user $u$ and assume $\bigcup_{u\in\CAL U}\CAL I_u=\CAL I$ to avoid cold start. Let $M:=\sum_{u\in\CAL U}|\CAL I_u|$ denote the total number of interactions. These interactions can be of various forms (such as clicks, views, comments, and purchases), and they reflect the user behavioral patterns to be learned by the recommender system.

\subsection{Problem Definition}\label{ssec:prelim-def}
The objective of a generative recommender system is to predict an item that a user will probably interact with next, from a sequence of items that the user has interacted with in the past. The task is formally defined in Problem~\ref{PRB:rec}.

\begin{PRB}[generative sequential recommendation]\label{PRB:rec}
\textbf{Input:} (i) a user $u\in\mathcal U$; (ii) an integer $N$ specifying the length of the interaction history of user $u$; (iii) a sequence $\mathcal I_u=[i_1,\dots,i_N]$ of items that user $u$ has interacted with in the past; (iv) the embedding $ x_i$ of each item $i\in\mathcal I_u$. \textbf{Output:} rank the candidate items $\mathcal I$ by the likelihood of each item to be the next interaction of user $u$.
\end{PRB}

\section{Proposed Framework: \Ours{}}
\label{sec:method}
In this section, we introduce our proposed method \emph{\OursFull{}} (\Ours{}). We describe our graph construction procedure in Section~\ref{ssec:method-graph}, our new scalable soft graph clustering method in Section~\ref{ssec:method-clus}, and our graph-based co-engagement tokenization in Section~\ref{ssec:method-rec}. 
We will discuss industry-scale deployment in Section~\ref{ssec:method-deploy}.

\subsection{User Behavior Modeling via a Sparsified Item Co-Engagement Graph}\label{ssec:method-graph}
In this subsection, we describe how we construct a graph to capture user behavioral patterns. We will use the graph for item interest profiling in Section~\ref{ssec:method-clus}. 

In existing graph-based recommendation methods, user behavior is often modeled using the user-item bipartite graph. In the bipartite graph, users and items are represented as nodes, and edges connect users to the items they have interacted with.
However, for industry-scale applications with massive user bases, this can be extremely computationally expensive, especially when the user has a long interaction history. 

\vspace{0.5em}\noindent\textbf{Co-engagement graph.} To address this critical limitation, we propose eliminating user nodes from the constructed graph and model user behavior using an item-item graph instead. By eliminating users from the graph, we can focus on the more static item relationships, reducing the need for frequent updates. In this work, we propose the item-item \emph{co-engagement} graph as a natural definition of the item-item graph. The two items $i,j\in\mathcal I$ are said to be \emph{co-engaged} if there exists a user $u\in\CAL U$ such that $u$ has interacted with both items $i$ and $j$ (i.e., $i\in\CAL I_u$ and $j\in\CAL I_u$); the user $u$ here is not necessarily unique. The co-engagement graph captures the patterns of items being interacted with together, allowing us to model user behavior without explicitly representing users.

Formally, in the item-item co-engagement graph $\CAL G^*=(\CAL I,\CAL E^*)$, items $\CAL I$ serve as nodes of $\CAL G^*$, and co-engagements $\CAL E^*$ serve as edges of $\CAL G^*$. Let $\CAL E^*\subseteq\CAL I\times\CAL I$ denote the set of item-item co-engagements:
\AL{
\CAL E^*:={}&\bigcup_{u\in\CAL U}\CAL I_u\times\CAL I_u
={}\{(i,j):\exists u\in\CAL U\text{ s.t. }i\in\CAL I_u,\,j\in\CAL I_u\},
}
where $\times$ denotes the Cartesian product between sets. The item-item co-engagement graph $\CAL G^*$ provides a powerful tool for modeling user behavior patterns beyond item feature similarity. For example, consider two items: a smartphone and a phone case. While they may not be similar in terms of their features, they are often co-engaged by users who purchase a new phone and then look for accessories to go with it. The co-engagement graph can capture such patterns, providing insightful information that complements item feature information.

From a theoretical perspective, the interaction history sequence of each user can be represented as a path on the item-item co-engagement graph, as formally stated in \OBSref{express}. Hence, edges on the item-item co-engagement graph encode the user interest transition behaviors. 

\begin{OBS}[expressiveness]\label{OBS:express}
For any user $u\in\CAL U$, their interaction history sequence $\CAL I_u=[i_1,\dots,i_N]$ is a path on the co-engagement graph $\CAL G^*=(\CAL I,\CAL E^*)$. That is, $(i_t,i_{t+1})\in\CAL E^*$ for all $t=1,\dots,N-1$.\hfill$\square$
\end{OBS}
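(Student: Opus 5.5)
The plan is to unfold the definition of $\CAL E^*$ and exhibit, for each consecutive pair in the sequence, a witnessing user; the witness will be $u$ itself.

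Fix a user $u\in\CAL U$ with interaction history $\CAL I_u=[i_1,\dots,i_N]$, and fix an index $t\in\{1,\dots,N-1\}$.

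\textbf{Step 1: both endpoints lie in $\CAL I_u$.} By the notation of Section~\ref{ssec:prelim-notat}, each $i_j$ with $1\le j\le N$ is an item that $u$ has interacted with. In particular $i_t\in\CAL I_u$ and $i_{t+1}\in\CAL I_u$, where $\CAL I_u$ is viewed as a set for membership. Hence $(i_t,i_{t+1})\in\CAL I_u\times\CAL I_u$.

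\textbf{Step 2: the pair is an edge.} By the definition $\CAL E^*=\bigcup_{u'\in\CAL U}\CAL I_{u'}\times\CAL I_{u'}$, this union contains the term indexed by $u'=u$. Therefore $(i_t,i_{t+1})\in\CAL E^*$. Equivalently, $u$ itself is the existential witness in the second form of the definition.

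\textbf{Step 3: the sequence is a path.} Since $t$ was arbitrary, every consecutive pair of $\CAL I_u$ is an edge of $\CAL G^*$, and all $i_t$ are nodes of $\CAL G^*$ because $\CAL I_u\subseteq\CAL I$. So $\CAL I_u$ is a path, or walk, on $\CAL G^*$.

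There is no real obstacle; the only subtlety is the meaning of ``path''. If repeated items occur, for example $i_t=i_{t+1}$, the sequence is a walk that may include self-loops $(i,i)\in\CAL E^*$. These are admissible because $\CAL E^*$ contains every diagonal pair $(i,i)$ with $i\in\CAL I_u$, so ``path'' should be read as ``walk''. If $N=1$, the claim holds vacuously.
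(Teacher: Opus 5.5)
Your argument is correct and is exactly the definitional check the paper leaves implicit: the observation is stated without a written proof, since each consecutive pair lies in $\mathcal I_u\times\mathcal I_u\subseteq\mathcal E^*$ with $u$ itself as the witness. Your caveat that ``path'' should be read as ``walk'' when items repeat (with diagonal pairs $(i,i)$ admissible under the main-text definition of $\mathcal E^*$) is a fair refinement.
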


\noindent\textbf{Graph sparsification.} Nonetheless, since $|\CAL E^*|$ can be as large as $\OP O(M^2)$, using the original graph $\CAL G^*$ would be computationally prohibitive in industrial scenarios. To ensure scalability, we propose a theoretically grounded approach to sparsifying the graph. Our goal is to approximately preserve the graph Laplacian, which encodes essential structural information of the graph \citep{ng2001spectral}. Formally, given $m\ll N^2$, we define the sparsified co-engagement graph as $\CAL G:=(\CAL I,\CAL E)$ where we sample only $m$ co-engagements to include in $\CAL E$ for each user:
\AL{\CAL E:=\bigcup_{u\in\CAL U}\OP{sample}(\CAL I_u\times\CAL I_u,\,m),}
where $\OP{sample}(\,\cdot\,,\,m)$ denotes sampling to size $m$ with replacement to form an undirected graph. We will use the sparsified graph $\CAL G$ for item interest profiling in Section~\ref{ssec:method-clus}.

As implied by the following \THMref{preserv-lap}, we theoretically show that  $|\CAL E|=\OP O(M\log M)$ suffices to preserve the graph Laplacian, which is nearly linear w.r.t.\ the total number $M$ of interactions. Notably, this is substantially sparser than $|\CAL E^*|=\OP O(M^2)$.

\begin{THM}[nearly linear complexity]\label{THM:preserv-lap}
Suppose that every user has $N$ interactions. Let $L_{\CAL E^*}$ be the Laplacian of $\CAL E^*$, and let $L_{\CAL E}$ be the Laplacian of $\CAL E$ with edge weights $\frac{N(N-1)}{2m}$ to match the total edge weights of $L^*$. Given any $0<\delta<1$ and $\epsilon>0$, if we use
\AM{m:=\Big\lceil 2N\Big(\frac{1}{3\epsilon}+\frac1{\epsilon^2}\Big)\log\frac{2|\CAL I|}{\delta}\Big\rceil,}
then $|\CAL E|\le\OP O(M\log M)$, and with probability at least $1-\delta$,
\AM{(1-\epsilon)L_{\CAL E^*}\preceq L_{\CAL E}\preceq(1+\epsilon)L_{\CAL E^*},}
where $\preceq$ denotes the Loewner order.
\end{THM}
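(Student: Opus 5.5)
We prove this with a matrix Chernoff/Bernstein bound for sums of independent random PSD matrices, applied user by user and then summed. We read $L_{\CAL E^*}$ as the multigraph Laplacian $L^*=\sum_{u}L_u$, where $L_u$ is the Laplacian of the complete graph on $\CAL I_u$. Each user contributes $\binom N2$ unordered pairs of unit weight, which is the normalization implicit in the weight $\frac{N(N-1)}{2m}$. For each user $u$ we draw $m$ pairs $e_{u,1},\dots,e_{u,m}$ uniformly with replacement from the $\binom N2$ pairs of $\CAL I_u$. We set
\[
Y_{u,k}:=\frac{N(N-1)}{2m}\,b_{e_{u,k}}b_{e_{u,k}}^\top ,
\]
where $b_e$ is the signed incidence vector of $e$. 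Then $L_{\CAL E}=\sum_{u,k}Y_{u,k}$ and $\mathbb E[Y_{u,k}]=\frac1m L_u$, so $\mathbb E L_{\CAL E}=L^*$. The edge count is immediate: $|\CAL E|\le m|\CAL U|=\OP O(N|\CAL U|\log(|\CAL I|/\delta))$. Since $M=N|\CAL U|$ and $|\CAL I|\le M$, this is $\OP O(M\log M)$ for fixed $\epsilon,\delta$.

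For the spectral guarantee we work in normalized coordinates. Let $P:=(L^*)^{+/2}$, restricted to the range of $L^*$; every $b_e$ used lies in that range, since both endpoints of $e$ sit in a common component. Set $Z_{u,k}:=PY_{u,k}P$, so that $\sum_{u,k}\mathbb E Z_{u,k}=\Pi$, the projection onto $\operatorname{range}(L^*)$. The key step is a uniform norm bound $\|Z_{u,k}\|\le R$, which reduces to bounding the effective resistance $b_e^\top L^{*+}b_e$ of a co-engaged pair $e=(i,j)$ in $L^*$. Because $L^*\succeq L_u$ and both endpoints lie in $\operatorname{range}(L_u)$, we get $b_e^\top L^{*+}b_e\le b_e^\top L_u^{+}b_e=2/N$, the effective resistance in $K_N$. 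Hence
\[
\|Z_{u,k}\|\le \frac{N(N-1)}{2m}\cdot\frac2N=\frac{N-1}{m}\le\frac Nm=:R .
\]
The variance proxy follows from $Z^2\preceq RZ$: it is $\bigl\|\sum\mathbb E Z^2\bigr\|\le R\,\bigl\|\sum\mathbb E Z\bigr\|=R$.

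Next we apply Tropp's matrix Bernstein inequality to $\sum(Z_{u,k}-\mathbb E Z_{u,k})$ in dimension at most $|\CAL I|$. This gives
\[
\Pr\Bigl[\bigl\|\textstyle\sum Z-\Pi\bigr\|\ge\epsilon\Bigr]\le 2|\CAL I|\exp\!\Bigl(-\frac{\epsilon^2/2}{R+R\epsilon/3}\Bigr).
\]
We need the right-hand side to be at most $\delta$, i.e. $m\ge 2N\bigl(\frac{1}{\epsilon^2}+\frac{1}{3\epsilon}\bigr)\log\frac{2|\CAL I|}{\delta}$, which is exactly the chosen $m$; this is why both the $1/(3\epsilon)$ and $1/\epsilon^2$ terms appear. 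On the complementary event, $-\epsilon\Pi\preceq P(L_{\CAL E}-L^*)P\preceq\epsilon\Pi$. Conjugating by $(L^*)^{1/2}$ gives $(1-\epsilon)L^*\preceq L_{\CAL E}\preceq(1+\epsilon)L^*$; this transfer is valid because $L_{\CAL E}$ and $L^*$ share a kernel containing the complement of the range.

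The main obstacle is the effective-resistance bound $2/N$ in the global graph. Everything depends on the monotonicity $L^*\succeq L_u\Rightarrow b^\top L^{*+}b\le b^\top L_u^+b$ for $b$ in $\operatorname{range}(L_u)$. To justify it, we would use the variational characterization $b^\top L^+b=\sup_x\,(2b^\top x-x^\top Lx)$ over the range, which is monotone in $L$. A secondary point is reconciling multiplicities: if $L_{\CAL E^*}$ is meant as the simple union graph, the argument should be run on the multigraph sum $\sum_u L_u$, which is what the stated edge weight matches.
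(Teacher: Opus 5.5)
Your proposal is correct and is essentially the paper's own proof. Both arguments use the same unbiased per-user reweighted sampling, read $L_{\mathcal E^*}$ as the multigraph sum of the user cliques, and normalize by $(L_{\mathcal E^*}^{\dagger})^{1/2}$. Both use the effective-resistance bound $2/N$ to get $R=N/m$ and a variance proxy of $N/m$ via $Y^2\preceq RY$, apply matrix Bernstein in dimension $|\mathcal I|$, and conjugate back using $\mathrm{range}(L_{\mathcal E})\subseteq\mathrm{range}(L_{\mathcal E^*})$. The only difference is cosmetic: you justify resistance monotonicity through the variational formula for $b^\top L^{+}b$, where the paper simply invokes Rayleigh's monotonicity law. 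If you keep the variational route, take the supremum over all $x$ rather than over a range, so that the comparison under $L_{\mathcal E^*}\succeq L_u$ is immediate.
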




\subsection{Item Interest Profiling via Soft Graph Clustering}\label{ssec:method-clus}
In this subsection, we elaborate on how we leverage the item-item co-engagement graph to compute item interest profiles.

A key challenge in processing the co-engagement graph $\CAL G$ is that some co-engagement relationships can be random. To address this challenge, we propose to employ graph clustering to extract meaningful co-engagements. The intuition is that if a group of items is frequently co-engaged with each other (i.e., they belong to the same cluster in the graph), their co-engagements are more likely to be meaningful. Hence, we propose to use the clusters of the co-engagement graph $\CAL G$, which should provide reliable information for the recommendation model to learn. We call these clusters \textbf{item interest prototypes}, which can be interpreted as the underlying interest categories. Furthermore, as shown in Observation~\ref{OBS:express}, the co-engagement graph encodes user interest transition behaviors. Therefore, by treating each cluster as an item interest prototype, we can help the recommendation model to gain insights into the underlying interest transitions that drive user behavior.

However, since existing graph clustering algorithms (e.g., Louvain \citep{louvain}, Leiden \citep{leiden}) typically assume each node belongs to only one cluster, they are unsuitable for capturing complex interests where each item can correspond to multiple interest prototypes simultaneously. For instance, a short-form video of an avocado toast recipe may attract both users interested in ``healthy eating'' and those interested in ``morning routines''. To bridge this gap, we propose to employ \textbf{soft graph clustering} instead, which allows for continuous cluster memberships and differentiable optimization.

Nevertheless, existing soft graph clustering methods (e.g., \citet{yu2005soft}) are typically not sufficiently scalable for industrial use cases. To enable scalable soft graph clustering, we propose a differentiable objective for soft graph clustering that generalizes the classic notion of graph modularity \citep{newman2006modularity}, which allows end-to-end gradient-based optimization and can thus be solved efficiently on GPUs.

For each item \( i \in \mathcal{I} \), we aim to find a membership distribution $p_i\in\BB R^C$ , where $C$ denotes the number of interest prototypes, and $p_{i,a}$ represents the probability that item $i$ belongs to interest prototype $a\in\{1,\dots, C\}$. We call $p_i$ the \textbf{item interest profile} of item $i$. To ensure \( \sum_{a=1}^C p_{i,a} = 1 \), we use parameterization $p_i:=\OP{softmax}(z_i)$, where logits $z_i\in\BB R^C$ are the variables to be optimized. We initialize and sparsify the variables using the Leiden algorithm \citep{leiden}. Collectively, $p_1,\dots,p_{|\CAL I|}$ form a \emph{soft membership} matrix \( P:=[p_1,\dots,p_{|\CAL I|}]\Tp \in \mathbb{R}^{|\CAL I| \times C} \). 

Next, we introduce our differentiable objective for soft graph clustering. Let $A\in\{0,1\}^{|\CAL I|\times|\CAL I|}$ denote the adjacency matrix of the co-engagement graph $\CAL G$, and let $k\in\BB N^{|\CAL I|}$ denote the degree vector (i.e., $k_i := \sum_{j} A_{i,j}$ is the degree of each item $i\in\CAL I$). Recall that given a hard (cluster) membership $h\in\{1,\dots,C\}^{|\CAL I|}$ (i.e., $h_i$ is the interest prototype that item $i$ belongs to), the classic graph modularity $Q_{\text{hard}}$ \citep{newman2006modularity} is defined as
\AL{Q_{\text{hard}}(h):= \frac{1}{|\CAL E|} \sum_{i,j\in\CAL I} \Big(A_{i,j} - \gamma\frac{k_i k_j}{|\CAL E|}\Big) 1_{[h_i=h_j]},}
where $\frac{k_i k_j}{|\CAL E|}$ is the expected number of edges between $i,j$ in a random graph under the Newman--Girvan null model theory \citep{newman2004finding}, and $\gamma>0$ denotes the clustering resolution. Here, we adopt the duplicate representation of the undirected graph $\CAL G$, i.e., $(i,j)$ and $(j,i)$ are both in $\CAL E$.

However, we cannot directly use $Q_{\text{hard}}$ to optimize a soft membership $P$ because $1_{[h_i=h_j]}$ is not a differentiable operation. To address the non-differentiability, we propose to maximize the expected modularity under distribution $P$ as a differentiable objective, which we show admits a simple closed form.
\begin{PRP}[closed form]\label{PRP:Qsoft} The expected modularity is
\AL{
Q_{\textnormal{soft}}(P):={}&\ExpOp_{h\sim P}[Q_{\textnormal{hard}}(h)]=\bigg(\frac{1}{|\CAL E|} \sum_{(i,j)\in\CAL E}p_i^\top p_j\bigg)-\gamma \frac{\|P^\top k\|_2^2}{|\CAL E|^2}.\label{eq:q-soft-final}
}
\end{PRP}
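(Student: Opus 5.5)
The plan is to use linearity of expectation together with the independence of the memberships across items. By this we mean the natural interpretation that $h\sim P$ draws each $h_i$ independently from the categorical distribution $p_i$. First we write
\[
\ExpOp_{h\sim P}[Q_{\text{hard}}(h)]=\frac{1}{|\CAL E|}\sum_{i,j\in\CAL I}\Big(A_{i,j}-\gamma\frac{k_ik_j}{|\CAL E|}\Big)\Pr[h_i=h_j],
\]
since the coefficients are deterministic. The key step is computing $\Pr[h_i=h_j]$. For $i\neq j$, independence gives $\Pr[h_i=h_j]=\sum_{a=1}^C p_{i,a}p_{j,a}=p_i^\top p_j$.

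The main obstacle is the diagonal terms $i=j$. There $1_{[h_i=h_i]}=1$ deterministically, whereas the formula $p_i^\top p_i=\|p_i\|_2^2$ is in general less than $1$. I would handle this by adopting the convention, implicit in the stated closed form, that the diagonal is treated like the off-diagonal case: the indicator on pairs is replaced by $p_i^\top p_j$ for all $i,j$, equivalently treating $h_i,h_j$ as two independent draws for every ordered pair. I would remark that the correction is the $P$-independent constant $\frac{1}{|\CAL E|}\sum_i(A_{i,i}-\gamma k_i^2/|\CAL E|)(1-\|p_i\|_2^2)$. For a graph without self-loops this reduces to a penalty term, and I would state it explicitly as the discrepancy. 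Thus the identity holds exactly under the pairwise-independent-draw model, and the proof would state this assumption upfront.

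With $\Pr[h_i=h_j]=p_i^\top p_j$, the first sum becomes $\frac{1}{|\CAL E|}\sum_{i,j}A_{i,j}p_i^\top p_j=\frac1{|\CAL E|}\sum_{(i,j)\in\CAL E}p_i^\top p_j$, using the duplicate representation in which $A_{i,j}=1$ exactly for $(i,j)\in\CAL E$. For the second sum I would factor:
\[
\sum_{i,j}k_ik_j\,p_i^\top p_j=\Big(\sum_i k_ip_i\Big)^{\!\top}\Big(\sum_j k_jp_j\Big)=\|P^\top k\|_2^2,
\]
since $P^\top k=\sum_i k_ip_i$. Multiplying by $\gamma/|\CAL E|^2$ yields the stated formula. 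The step worth making explicit in the writeup is the independence and diagonal convention; the rest is direct algebra.
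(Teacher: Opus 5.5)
Your computation is the same as the paper's: linearity of expectation, $\Pr[h_i=h_j]=p_i^\top p_j$, and $\sum_{i,j}k_ik_j\,p_i^\top p_j=\|P^\top k\|_2^2$. You are also right that the diagonal is where this breaks. The paper's own proof simply applies $\ExpOp_{h\sim P}[1_{[h_i=h_j]}]=\sum_a p_{i,a}p_{j,a}$ to $i=j$ without comment. The error is in how you dispose of it. The correction you wrote, $\frac{1}{|\mathcal{E}|}\sum_i(A_{i,i}-\gamma k_i^2/|\mathcal{E}|)(1-\|p_i\|_2^2)$, is the right expression, but it is not a $P$-independent constant, since it depends on $P$ through $\|p_i\|_2^2$. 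For a graph without self-loops (the paper discards them),
\[
\ExpOp_{h\sim P}[Q_{\text{hard}}(h)]=\bigg(\frac{1}{|\mathcal{E}|}\sum_{(i,j)\in\mathcal{E}}p_i^\top p_j-\gamma\frac{\|P^\top k\|_2^2}{|\mathcal{E}|^2}\bigg)-\frac{\gamma}{|\mathcal{E}|^2}\sum_{i\in\mathcal{I}}k_i^2\big(1-\|p_i\|_2^2\big).
\]
So the claimed right-hand side overstates the expected modularity by a nonnegative, $P$-dependent amount. That amount vanishes only when every $p_i$ with $k_i>0$ is one-hot.

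This matters: it means the statement as written is false, and your ``convention'' changes the objective being optimized, not just its value. Any expectation of $Q_{\text{hard}}(h)$, under any distribution of $h$, is at most $\max_h Q_{\text{hard}}(h)$, whereas the closed form rewards spread-out memberships. Take the bowtie graph: triangles $\{1,2,3\}$ and $\{3,4,5\}$ sharing item $3$, with $|\mathcal{E}|=12$, $\gamma=1$, $C=2$. The best hard modularity is $\tfrac{1}{9}$, attained e.g.\ by $\{1,2,3\},\{4,5\}$. But $p_1=p_2=e_1$, $p_4=p_5=e_2$, $p_3=(\tfrac{1}{2},\tfrac{1}{2})$ gives closed form $\tfrac{8}{12}-\tfrac{72}{144}=\tfrac{1}{6}$, while its true expected modularity is $\tfrac{1}{6}-\tfrac{8}{144}=\tfrac{1}{9}$. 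So you cannot present the identity as holding up to a constant, and no reading of $h\sim P$ as a single random hard assignment rescues it.

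What your algebra does prove, verbatim, is the identity for a redefined objective. Draw $h,h'$ independently from $P$ and set $Q_{\text{soft}}(P):=\ExpOp_{h,h'}\big[\frac{1}{|\mathcal{E}|}\sum_{i,j}\big(A_{i,j}-\gamma\frac{k_ik_j}{|\mathcal{E}|}\big)1_{[h_i=h'_j]}\big]$. For this objective, $\Pr[h_i=h'_j]=p_i^\top p_j$ holds for every ordered pair, including $i=j$. Make that the definition, or state the corrected formula above, and say explicitly that it differs from $\ExpOp_{h\sim P}[Q_{\text{hard}}(h)]$.
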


We call this objective $Q_{\text{soft}}$ the \emph{soft modularity}. It can be computed efficiently on GPUs when $\CAL E$ and $P$ are sparse, as shown in \PRPref{complexity-qsoft}. 

\begin{PRP}[nearly linear complexity]\label{PRP:complexity-qsoft}
If each row of $P$ has $\le\rho$ nonzero entries, then $Q_\textnormal{soft}(P)$ 
can be computed in time
\AL{\OP O(\rho|\CAL E|)=\OP O(\rho M\log M).}
\end{PRP}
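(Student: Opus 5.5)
We take the closed form of \PRPref{Qsoft} as the starting point and bound the cost of each of its two terms separately, using a sparse representation of $P$. We store $P$ row-wise: each $p_i$ is a list of at most $\rho$ (index, value) pairs. Building this list from the logits $z_i$ costs $\OP O(\rho)$ per item, since the softmax is taken only over the support fixed by the Leiden initialization and sparsification. This step is $\OP O(\rho|\CAL I|)$ overall. Because $\bigcup_u\CAL I_u=\CAL I$ and every item has at least one incident sampled edge (or at least $|\CAL I|\le M\le|\CAL E|$ up to constants under the construction), the step is dominated by $\OP O(\rho|\CAL E|)$.

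\textbf{First term.} For each edge $(i,j)\in\CAL E$ we compute $p_i^\top p_j$ as a sparse--sparse inner product.
\begin{itemize}
\item If both supports are kept sorted by prototype index, a merge computes the product in $\OP O(\rho)$ time.
\item Alternatively, we scatter $p_i$ into a hash table and probe it with the entries of $p_j$.
\end{itemize}
Summing over edges gives $\OP O(\rho|\CAL E|)$. On GPUs this is a gather over the edge list followed by a segmented reduction, so it parallelizes without changing the work bound.

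\textbf{Second term.} First compute the degree vector $k$ in $\OP O(|\CAL E|)$ time by one pass over the edge list. Then form $P^\top k=\sum_i k_i p_i\in\BB R^C$ by accumulating $k_i p_{i,a}$ into a length-$C$ buffer over the nonzeros of $P$, which costs $\OP O(\rho|\CAL I|)$. Finally, $\|P^\top k\|_2^2$ needs only the nonzero coordinates of $P^\top k$, and there are at most $\min(C,\rho|\CAL I|)$ of them. So this term costs $\OP O(\rho|\CAL I|+|\CAL E|)\subseteq\OP O(\rho|\CAL E|)$. Dividing by $|\CAL E|$ and $|\CAL E|^2$ and combining the two terms is $\OP O(1)$.

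\textbf{Conclusion.} It remains to substitute $|\CAL E|=\OP O(M\log M)$ from \THMref{preserv-lap}, which gives $\OP O(\rho M\log M)$.

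The main subtlety is not the arithmetic but justifying $|\CAL I|=\OP O(|\CAL E|)$ and avoiding any $\OP O(C)$ dense cost. For the first, we use the no-cold-start assumption, which gives $|\CAL I|\le M$, together with $|\CAL E|\ge M$ when $m\ge N$, as guaranteed by the choice of $m$. For the second, we note that $C$ only enters through the sparse buffer, so the computation never pays a dense $\OP O(C)$ cost.
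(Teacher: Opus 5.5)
Your proposal is correct and follows the same route as the paper's proof. Both charge $\OP O(\rho)$ per edge for $\sum_{(i,j)\in\CAL E}p_i^\top p_j$, charge $\OP O(\rho|\CAL I|)$ for $\|P^\top k\|_2^2$ and absorb it into $\OP O(\rho|\CAL E|)$, and then apply the edge bound of \THMref{preserv-lap}.

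Your extra justification of $|\CAL I|=\OP O(|\CAL E|)$, which the paper leaves implicit, holds when sampled edges are counted with multiplicity ($|\CAL E|=|\CAL U|m$) and $m\ge N$; the theorem's $m$ guarantees $m\ge N$ for $\epsilon\le1$ but not for every $\epsilon>0$. Your alternative claim that every item receives a sampled edge does not hold deterministically. You can sidestep both by accumulating $P^\top k=\sum_{(i,j)\in\CAL E}p_i$ over the edge list, which costs $\OP O(\rho|\CAL E|)$ directly.
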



The soft modularity objective \( Q_{\text{soft}} \) can be interpreted as follows. The inner product \( p_i^\top p_j = \sum_{a=1}^C p_{i,a} p_{j,a} \) represents the probability that items \( i \) and \( j \) belong to the same interest prototype under the soft membership distribution ${P}$, serving as a continuous analog to the term $1_{[h_i=h_j]}$ in hard modularity. The intuition behind \( Q_{\text{soft}} \) is to encourage high co-membership probabilities \( p_i^\top p_j \) for item pairs \( (i, j) \) that are heavily co-engaged, while penalizing over-membership to item pairs that are not expected to be connected according to the Newman--Girvan null model theory \citep{newman2004finding}. Furthermore, unlike the original modularity $Q_{\text{hard}}$, our proposed $Q_{\text{soft}}$ is fully differentiable w.r.t.\ $P$, making it suitable as an objective function for optimizing the item interest profiles $P$. 

\subsection{Interest Profile Tokenization for Generative Sequential Recommendation}\label{ssec:method-rec}
In this subsection, we describe how to incorporate item interest profiles into generative sequential recommendation.

\vspace{0.5em}\noindent\textbf{Interest profile tokenization.} As soft item interest profiles are continuous, we use continuous tokens to represent interest profiles. Before introducing our interest profile tokenization, we first introduce the tokenization of item interest prototypes. For each interest prototype $a\in\{1,\dots,C\}$, we define the interest prototype embedding $v_a\in\BB R^d$ of $a$ as the weighted average of the embeddings of the items of interest prototype $a$:
\AL{v_a:=\frac{\sum_{i\in\CAL I}p_{i,a}x_i}{\sum_{i\in\CAL I}p_{i,a}},\qquad a\in\{1,\dots,C\}.\label{eq:prot-emb}}
Let $V:=[v_1,\dots,v_C]^\top\in\BB R^{C\times d}$ denote the interest prototype embedding matrix. Then, Equation~\eqref{eq:prot-emb} can be rewritten into the matrix form as $V=\frac{P^\top X}{P^\top1}$ and can be computed efficiently as $P$ is sparse.

Next, we define the \textit{interest profile token} $y_i$ of each item $i\in\CAL I$ as the weighted average of embeddings $v_a$ of prototypes $a$ of item $i$:
\AL{y_i:=\sum_{a=1}^Cp_{i,a}v_a,\qquad i\in\CAL I.\label{eq:prof-emb}}
Let $Y:=[y_1,\dots,y_{|\CAL I|}]^\top\in\BB R^{|\CAL I|\times d}$ denote the interest profile token matrix. Then, Equation~\eqref{eq:prof-emb} can be rewritten into the matrix form as $Y=PV$, which can be computed efficiently as $P$ is sparse.

\vspace{0.5em}\noindent\textbf{Sequence formulation.} Our key idea is to design a new sequence format that takes item interest profiles into account. Specifically, given the interaction history sequence $\CAL I_u=[i_1,\dots,i_N]$ of a user $u\in\CAL U$, we define a \textit{user interest transition sequence} as follows:
\AL{\CAL R_u:=[\langle\texttt{BOS}\rangle,x_{i_1},y_{i_1},\dots,x_{i_N},y_{i_N}],}
where $\langle\texttt{BOS}\rangle$ denotes a beginning sequence. Here, $y_{i_1},\dots,y_{i_N}$ can be interpreted as the graph-based semantic tokenization of the items $i_1,\dots,i_N$, respectively.

\vspace{0.5em}\noindent\textbf{Training objectives.} Let $F$ denote the generative sequential recommender that autoregressively predicts the next-token distribution. Our goal here is to predict the next item $i_{t+1}\in\CAL I$ that user $u$ will interact with next, given the user interaction history sequence $[i_1,\dots,i_t]$ ($t=1,2,\ldots$). Hence, for each $t=1,\dots,N-1$, we use the cross entropy loss w.r.t.\ to train the generative recommender $F$:
\AL{\CAL L_\text{item}^t:={}&{-\log F(i_{t+1}\mid \langle\texttt{BOS}\rangle,x_{i_1},y_{i_1},\dots,x_{i_t},y_{i_t})}\\
={}&{-\log F(i_{t+1}\mid(\CAL R_u)_{\le 2t+1})}
.}
Furthermore, to help the recommender understand user interest transition behaviors, we also train the recommender to predict the interest of each interaction. Since the ground-truth user interests are not available, we instead propose using the item interest profile as the soft label in the cross entropy loss for each $t=1,\dots,N$:
\AL{\CAL L_\text{profile}^t:={}&{-\sum_{a=1}^C p_{i_t,a}\log F(a\mid \langle\texttt{BOS}\rangle,x_{i_1},y_{i_1},\dots,x_{i_t})}\\
={}&{-\sum_{a=1}^C p_{i_t,a}\log F(a\mid (\CAL R_u)_{\le 2t})}
.}
Together, we train the recommender $F$ using a weighted combination of the two losses simultaneously:
\AL{\CAL L^t:=\CAL L_\text{item}^t+\lambda\CAL L_\text{profile}^t,}
where $\lambda>0$ is a hyperparameter to control the weight of $\CAL L_\text{profile}^t$.



\nocite{wei2024robust,bao2025latte,chen2024wapiti,liu2025breaking,liu2024logic,liu2024class,liu2024aim,liu2023topological,zeng2025pave,zeng2024graph,lin2025moralise,lin2024backtime,qiu2025saffron,qiu2025ask,qiu2025efficient,qiu2024tucket,qiu2023reconstructing,qiu2022dimes,xu2024discrete,li2025model,zou2025transformer,qiu2024gradient,yoo2025embracing,yoo2025generalizable,yoo2024ensuring,chan2024group,wu2024fair,he2024sensitivity,wang2023networked}

\begin{table}[t]
\centering
\caption{Statistics of public datasets.}
\label{tab:datasets}
\begin{tabular}{l|cccc}
\toprule
\textbf{Dataset} & \#Users & \#Items & \#Interacts & Sparsity \\
\midrule
Beauty & 22,363 & 12,101 & 198,502 & 99.93\% \\
Sports & 25,598 & 18,357 & 296,337 & 99.95\% \\
Toys & 19,412 & 11,924 & 167,597 & 99.93\% \\
Yelp & 30,431 & 20,033 & 316,354 & 99.95\% \\
\bottomrule
\end{tabular}
\end{table}

\begin{table*}[t]
\caption{Comparison with baseline methods on public datasets (best marked in bold). Our proposed \Ours{} consistently outperforms all baseline methods on all four datasets.}
\label{tab:main}
\vspace{-0.5em}
\centering
\resizebox{1\linewidth}!{
\begin{tabular}{cl|cc|ccccc|ccc}
\toprule
\multicolumn{2}{c|}{\textbf{Method Type}\,$\rightarrow$} & \multicolumn{2}{c|}{Classic} & \multicolumn{5}{c|}{Sequential / Tokenization-Based} & \multicolumn{3}{c}{Graph-Based} \\
\textbf{Dataset}\,$\downarrow$ & \textbf{Metric}\,$\downarrow$ &{POP}&{MF}&{GRU4Rec}&{SASRec}&{BERT4Rec}&{Caser}&{EAGER}&{LightGCN}&{HeLLM}&\Ours{} (Ours)\\
\midrule
\multirow{6}*{Beauty}
&Recall@1 &{0.0678}&{0.0405}&{0.1870}&{0.1531}&{0.1337}&{0.1337}&{0.1213}&{0.1435}&{0.1690}&\textbf{0.2067}\\
&Recall@5 &{0.2105}&{0.1461}&{0.3741}&{0.3640}&{0.3032}&{0.3125}&{0.2678}&{0.3081}&{0.2802}&\textbf{0.3917}\\
&Recall@10&{0.3386}&{0.2311}&{0.4696}&{0.4739}&{0.3942}&{0.4106}&{0.3663}&{0.4042}&{0.3413}&\textbf{0.4892}\\
&NDCG@5   &{0.1391}&{0.0934}&{0.2848}&{0.2622}&{0.2219}&{0.2268}&{0.1962}&{0.2286}&{0.2278}&\textbf{0.3035}\\
&NDCG@10  &{0.1803}&{0.1207}&{0.3156}&{0.2975}&{0.2512}&{0.2584}&{0.2278}&{0.2596}&{0.2474}&\textbf{0.3334}\\
&MRR      &{0.1558}&{0.1096}&{0.2852}&{0.2614}&{0.2263}&{0.2308}&{0.2060}&{0.2340}&{0.2359}&\textbf{0.3034}\\
\midrule
\multirow{6}*{Sports}
&Recall@1 &{0.0763}&{0.0489}&{0.1455}&{0.1255}&{0.1135}&{0.1160}&{0.0417}&{0.1226}&{0.1021}&\textbf{0.1750}\\
&Recall@5 &{0.2293}&{0.1603}&{0.3466}&{0.3375}&{0.2866}&{0.3055}&{0.1398}&{0.2841}&{0.2214}&\textbf{0.3903}\\
&Recall@10&{0.3423}&{0.2491}&{0.4622}&{0.4722}&{0.4014}&{0.4299}&{0.2232}&{0.3855}&{0.2946}&\textbf{0.5093}\\
&NDCG@5   &{0.1538}&{0.1048}&{0.2497}&{0.2341}&{0.2020}&{0.2126}&{0.0906}&{0.2056}&{0.1641}&\textbf{0.2869}\\
&NDCG@10  &{0.1902}&{0.1334}&{0.2869}&{0.2775}&{0.2390}&{0.2527}&{0.1174}&{0.2383}&{0.1875}&\textbf{0.3254}\\
&MRR      &{0.1660}&{0.1202}&{0.2520}&{0.2378}&{0.2100}&{0.2191}&{0.1078}&{0.2133}&{0.1750}&\textbf{0.2867}\\
\midrule
\multirow{6}*{Toys}
&Recall@1 &{0.0585}&{0.0257}&{0.1878}&{0.1262}&{0.1114}&{0.0997}&{0.1201}&{0.1310}&{0.1507}&\textbf{0.2006}\\
&Recall@5 &{0.1977}&{0.0978}&{0.3682}&{0.3344}&{0.2614}&{0.2795}&{0.2717}&{0.2799}&{0.2717}&\textbf{0.3779}\\
&Recall@10&{0.3008}&{0.1715}&{0.4663}&{0.4493}&{0.3540}&{0.3896}&{0.3777}&{0.3721}&{0.3375}&\textbf{0.4691}\\
&NDCG@5   &{0.1286}&{0.0614}&{0.2820}&{0.2327}&{0.1885}&{0.1919}&{0.1977}&{0.2078}&{0.2147}&\textbf{0.2931}\\
&NDCG@10  &{0.1618}&{0.0850}&{0.3136}&{0.2698}&{0.2183}&{0.2274}&{0.2319}&{0.2374}&{0.2359}&\textbf{0.3225}\\
&MRR      &{0.1430}&{0.0819}&{0.2842}&{0.2338}&{0.1967}&{0.1973}&{0.2075}&{0.2154}&{0.2228}&\textbf{0.2942}\\
\midrule
\multirow{6}*{Yelp}
&Recall@1 &{0.0801}&{0.0624}&{0.2375}&{0.2405}&{0.2188}&{0.2053}&{0.0976}&\textbf{0.2696}&{0.2148}&{0.2558}\\
&Recall@5 &{0.2415}&{0.2036}&{0.5745}&{0.5976}&{0.5111}&{0.5437}&{0.2903}&{0.5517}&{0.4053}&\textbf{0.6105}\\
&Recall@10&{0.3609}&{0.3153}&{0.7373}&{0.7597}&{0.6661}&{0.7265}&{0.4088}&{0.6756}&{0.4909}&\textbf{0.7736}\\
&NDCG@5   &{0.1622}&{0.1333}&{0.4113}&{0.4252}&{0.3696}&{0.3784}&{0.1960}&{0.4165}&{0.3160}&\textbf{0.4398}\\
&NDCG@10  &{0.2007}&{0.1692}&{0.4642}&{0.4778}&{0.4198}&{0.4375}&{0.2343}&{0.4566}&{0.3436}&\textbf{0.4927}\\
&MRR      &{0.1740}&{0.1470}&{0.3927}&{0.4026}&{0.3595}&{0.3630}&{0.2009}&{0.4025}&{0.3142}&\textbf{0.4180}\\
\midrule
\multicolumn{2}{c|}{\textbf{Average Rank}\,$\rightarrow$} & {8.62} & {9.75} & {2.46} & {2.83} & {6.35} & {5.23} & {7.94} & {4.50} & {6.27} & \textbf{1.04} \\ 
\bottomrule
\end{tabular}
}
\end{table*}

\section{Industrial Deployment}\label{ssec:method-deploy}
We have successfully productionized our \Ours{} on multiple product surfaces\AtMeta{}. Considering the huge user base (billions of monthly active users~\citep{nyt}) of Meta products (e.g., Instagram Reels), a main challenge in deployment lies in scalability and real-time efficiency.
To ensure the timeliness of item interest profiles, we have implemented an in-house high performance graph processing engine that can efficiently execute the graph clustering algorithm~\citep{louvain} in a distributed environment.
We run the graph clustering algorithm periodically offline, so the execution time of the graph clustering algorithm has no impact on the response time to users.


\section{Experiments}
We conduct extensive experiments on both public datasets and our product surfaces to answer the following research questions:
\begin{enumerate}
\renewcommand\labelenumi{\textbf{RQ\theenumi:}}
\item\label{RQ:baseline}How does the proposed \Ours{} compare with state-of-the-art recommendation methods offline?
\item\label{RQ:product}How does the proposed \Ours{} perform on \Metas{} product surfaces online?
\item\label{RQ:eff}How does the proposed \Ours{} influence the efficiency in training and inference, respectively?
\item\label{RQ:abla}How do the proposed soft graph clustering and item interest profiling influence recommendation quality?
\end{enumerate}

\subsection{Experimental Settings}
In this subsection, we describe the experimental settings in our offline empirical evaluation on public datasets. 

\vspace{0.5em}\noindent\textbf{Public datasets.} To evaluate our proposed \Ours{}, we conduct experiments on four widely-used public real-world datasets: Beauty, Sports, Toys, and Yelp. Beauty, Sports, and Toys are sub-categories from the Amazon review data collection \citep{mcauley2015image}, which is collected from Amazon, a popular online shopping platform. Yelp is another popular online platform that provides crowd-sourced reviews about businesses, and the Yelp dataset \citep{asghar2016yelp} is collected from Yelp. In our experiments, we use Yelp data from January 1st, 2019 onwards. The dataset statistics are summarized in Table~\ref{tab:datasets}. We can see that all of these datasets are highly sparse, which ensures that these datasets are sufficiently challenging for evaluating the performance under the sequential recommendation setting.

\vspace{0.5em}\noindent\textbf{Baseline methods.} To comprehensively benchmark the performance of our proposed \Ours{}, we compare our \Ours{} with six strong baseline methods, including classic methods, sequential / tokenization-based methods, and graph-based methods. We briefly describe the baseline methods as follows. (i) Classic methods: \textbf{POP} is a heuristic method that ranks items based on their popularity, which is known to be a strong baseline in recommendation (e.g., \citet{yoo2025generalizable}). Matrix factorization \textbf{MF} \citep{koren2009matrix} is the most classic recommendation method; we use the Bayesian Personalized Ranking (BPR) loss \citep{rendle2012bpr} here. (ii) Sequential methods: \textbf{GRU4Rec} \citep{hidasi2015session} utilizes the GRU recurrent neural network \citep{cho2014properties} for modeling sequences and subsequently makes recommendation predictions. \textbf{SASRec} \citep{kang2018self} uses multi-head self-attention \citep{vaswani2017attention} to handle intricate sequential data. \textbf{BERT4Rec} \citep{sun2019bert4rec} uses the cloze objective function from BERT \citep{devlin2019bert} to enable self-supervised learning instead of autoregressively predicting only the next item. \textbf{Caser} \citep{tang2018personalized} combines both horizontal and vertical convolution operations to more effectively capture complex interactions within user interaction history sequences. \textbf{EAGER} \citep{wang2024eager} is a two-stream generative recommender with behavior-semantic collaboration. (iii) Graph-based methods. \textbf{LightGCN} \citep{he2020lightgcn} is a strong graph convolutional network for collaborative filtering. \textbf{HeLLM} \citep{guo2025multi} uses hypergraphs to enhance LLM-based recommenders.

\vspace{0.5em}\noindent\textbf{Evaluation protocol \& metrics.} For sequential recommendation evaluation, we sort user interactions by their timestamps in ascending order to form user interaction history sequences. To ensure a reliable evaluation, we employ a 5-core setting to exclude items with fewer than 5 interactions, following previous works (e.g., \citet{rendle2010factorizing,sun2019bert4rec}). Following previous works (e.g., \citet{kang2018self,ren2020sequential}), we adopt the leave-one-out approach to split the dataset. Specifically, for the interaction history sequence of each user, we used the last item as the test data, the second-to-last item as the validation data, and the rest of the sequence as the training data. To evaluate the ranking capability of our proposed \Ours{}, we form the test set as follows: for each user, we use the last item they interacted with as the positive item and randomly sample 99 items from the remaining items as negative items. We use Recall@1, Recall@5, Recall@10, NDCG@5, NDCG@10, and the mean reciprocal rank (MRR) as metrics.

\vspace{0.5em}\noindent\textbf{Implementation details.}
We use Llama 2 13B as the recommender model backbone and finetune it for 3 epochs using low-rank adapters (LoRA) with 16 ranks and rank dropout rate 0.05. We initialize the model parameters using the Xavier initialization and optimize the model parameters using the Adam optimizer with initial learning rate 0.0003 and the cosine learning rate schedule with 100 warmup steps. We use SASRec embeddings with $d=64$ as item embeddings and limit the maximum sequence length to 50 for all methods on all datasets to avoid out-of-memory errors and to ensure a fair comparison. To control the number of interest prototypes, we use $\gamma=0.8$ for Beauty and Toys and $\gamma=1$ for Sports and Yelp.
For baseline methods, we use the open-source code released by their authors and adapt the code to our experimental settings.

\subsection{Offline Testing}
To answer RQ\ref{RQ:baseline}, we compare our proposed \Ours{} with baseline methods to evaluate the recommendation performance. The results presented in Table~\ref{tab:main} and Figure~\ref{fig:recall-k} demonstrate the effectiveness of our proposed method \Ours{} in comparison to all six baseline methods across all four public datasets. Our \Ours{} consistently outperforms all baseline methods across all six metrics on all four datasets, achieving the highest recall, NDCG, and MRR. We observe that the performance gap between our \Ours{} and the best-performing baseline method is significant. For instance, our proposed \Ours{} achieves 14.9\% higher NDCG@5 than the best-performing baseline on the Sports dataset.
As these datasets are large-scale datasets with diverse user behavior patterns, the results suggest that our method is capable of handling complex user behavior data and can effectively capture the underlying patterns. In terms of average rank, our \Ours{} achieves the top rank across all datasets, indicating its overall superiority over the baseline methods; in stark contrast, the average rank of the baseline methods varies across datasets.
Overall, the results demonstrate the effectiveness of our proposed \Ours{} in modeling user behavior patterns and item relationships using the co-engagement graph. This suggests that our \Ours{} can model item relationships and user behavior patterns using the co-engagement graph and provides a substantial advantage over existing methods.

\begin{figure*}[t]
\centering

\begin{subfigure}[b]{0.24\linewidth}
\centering
\includegraphics[width=\linewidth]{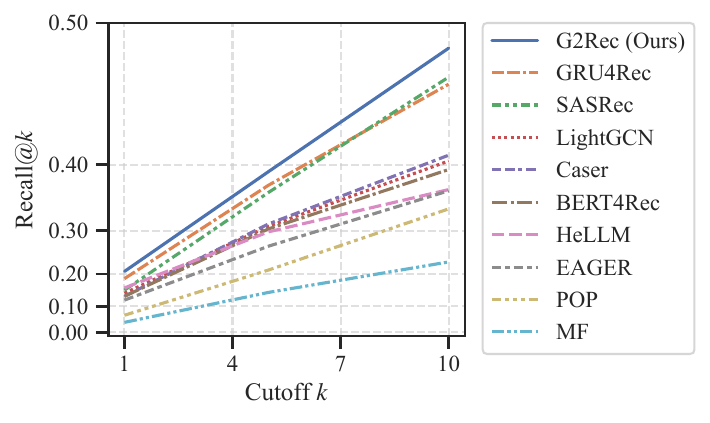}
\caption{Beauty.}
\label{fig:recall-B}
\end{subfigure}
\hfill
\begin{subfigure}[b]{0.24\linewidth}
\centering
\includegraphics[width=\linewidth]{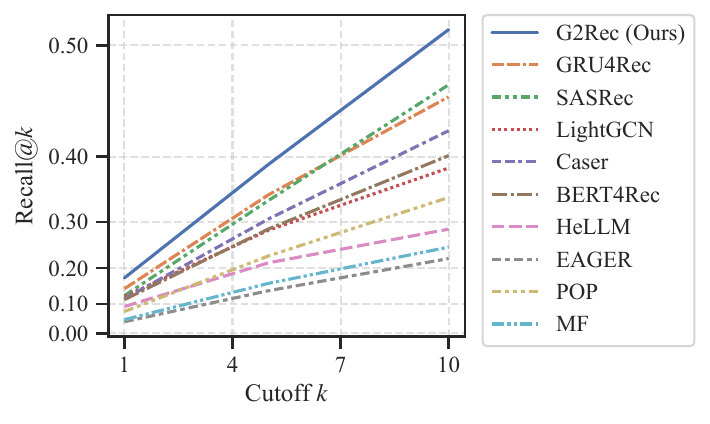}
\caption{Sports.}
\label{fig:recall-S}
\end{subfigure}
\hfill
\begin{subfigure}[b]{0.24\linewidth}
\centering
\includegraphics[width=\linewidth]{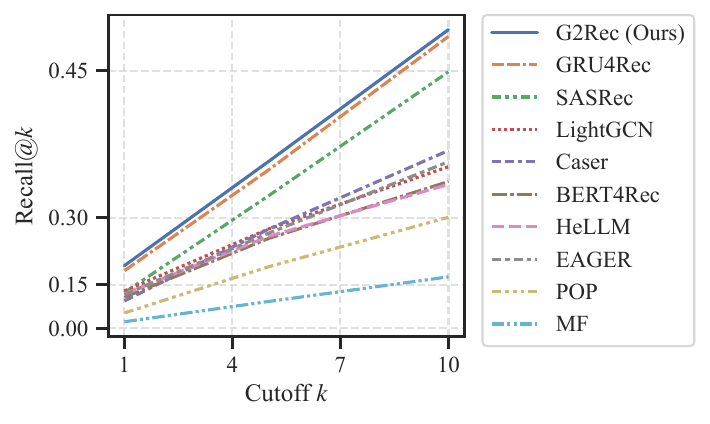}
\caption{Toys.}
\label{fig:recall-T}
\end{subfigure}
\hfill
\begin{subfigure}[b]{0.24\linewidth}
\centering
\includegraphics[width=\linewidth]{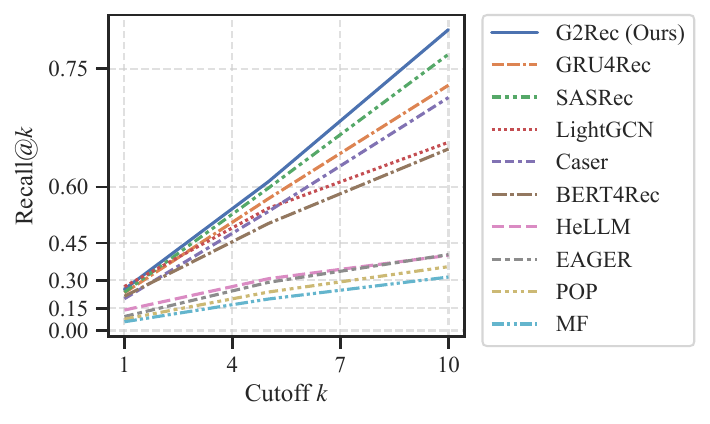}
\caption{Yelp.}
\label{fig:recall-Y}
\end{subfigure}
\vspace{-0.5em}
\caption{Our \Ours{} consistently outperforms all baselines on all four datasets under various cutoffs.}
\label{fig:recall-k}
\end{figure*}

\begin{table}[t]
\centering
\caption{Training and inference time per batch. Our interest profile tokens have only negligible impact on training and inference time.}
\label{tab:eff}
\vspace{-0.9em}
\begin{tabular}{l|cc}
\toprule
\textbf{Method}&Training&Inference\\
\midrule
Item-Only & 0.763 & 0.0534 \\
\Ours{} (Ours)   & 0.806 & 0.0561 \\
\midrule
Overhead & +0.043 & +0.0027 \\
\bottomrule
\end{tabular}
\end{table}

\subsection{Large-Scale Online Testing}

We have productized our proposed \Ours{} on multiple product surfaces\AtMeta{}. While some internal modifications were made to adapt it for deployment on each product surface, the core methodology remains consistent with the proposed framework. 

To answer RQ\ref{RQ:product}, we evaluate our proposed \Ours{} via both short-term (7-day) and long-term online A/B testing. (Due to the corporate policy, we are unable to disclose the detailed settings.) We have observed \textbf{consistent improvement} in terms of a wide range of top-line metrics on \textbf{user engagement}, \textbf{content diversity}, and \textbf{serving efficiency}. In particular, 
product launches with our holistic interest modeling lead to $>0.03\%$ in-session improvement and exhibit solid wins ($0.06\%$ to $0.19\%$) in terms of multiple user engagement metrics such as total time-spent, likes, shares, etc. 
These results provide strong evidence that our proposed \Ours{} considerably enhances user experience by accurately capturing user behavioral patterns and interest prototypes.

\subsection{Efficiency Study}
In addition to evaluating the effectiveness of our proposed \Ours{}, we also investigate its efficiency as efficiency is important in industrial scenarios. Table~\ref{tab:eff} reports the training and inference time per batch for our method and the item-only baseline on the Beauty dataset.
The results show that incorporating interest profile tokens into the model has only a negligible impact on both training and inference time. Specifically, our method requires only 0.043 seconds more per batch during training and only 0.0027 seconds more per batch during inference compared to the item-only baseline. This suggests that our approach can be efficiently integrated into industrial recommendation systems without incurring significant computational overhead.
This efficiency is due to the fact that our interest profile tokens are computed offline using the co-engagement graph, and then simply concatenated with the item embeddings during training and inference. This design allows us to leverage the structural information captured by the co-engagement graph without requiring expensive online computations.
Overall, these results demonstrate that our approach offers a good balance between effectiveness and efficiency, making it a practical solution for industrial recommender systems.

\begin{table}[t]
\centering
\caption{Ablation study on soft graph clustering. Our soft graph clustering algorithm consistently outperforms hard graph clustering on all datasets in terms of modularity.}
\label{tab:abla-clus}
\vspace{-0.5em}
\begin{tabular}{l|cccc}
\toprule
\textbf{Method}&Beauty&Sports&Toys&Yelp\\
\midrule
Hard (Leiden) & 0.419 & 0.365 & 0.437 & 0.691 \\
\textbf{Soft} (Ours) & \textbf{0.499} & \textbf{0.452} & \textbf{0.550} & \textbf{0.757} \\
\bottomrule
\end{tabular}
\end{table}

\begin{figure}
\centering
\includegraphics[width=0.5\linewidth]{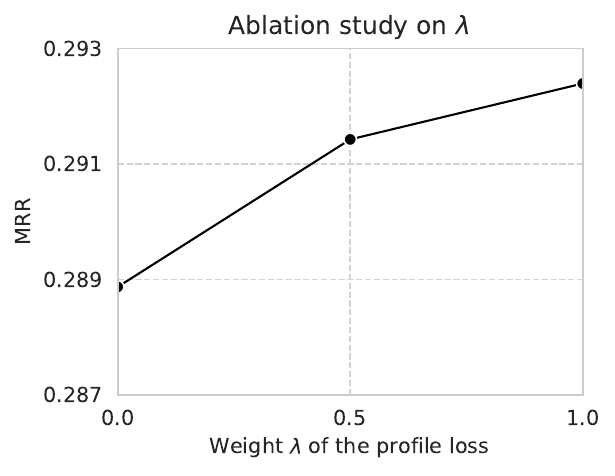}
\caption{Ablation study on the weight $\lambda$ of $\CAL L^t_\text{profile}$. This confirms the efficacy of our proposed interest profiling.}
\label{fig:coef}
\end{figure}

\subsection{Ablation Studies}
To answer RQ\ref{RQ:abla}, we conduct ablation studies on our proposed soft graph clustering and item interest profiling.

\vspace{0.5em}\noindent\textbf{Efficacy of soft graph clustering.} We conduct an ablation study to evaluate the efficacy of our soft graph clustering algorithm. We compare our method with hard graph clustering produced by the Leiden algorithm \citep{leiden}. As shown in Table \ref{tab:abla-clus}, the results demonstrate that our soft graph clustering algorithm consistently outperforms hard graph clustering in terms of modularity, a widely-used metric for evaluating the quality of graph clustering. The improvements are significant across all datasets (for example, 23.8\% improvement in modularity score on the Sports dataset). This suggests that our soft graph clustering algorithm is able to capture more nuanced and complex item interest profiles in the co-engagement graph, which is essential for accurately modeling user interests. These results validate the design choice of using soft graph clustering in our proposed \Ours{} and highlight its importance in achieving high-quality item interest profiles.

\vspace{0.5em}\noindent\textbf{Efficacy of item interest profiling.} To verify the efficacy of our interest profile loss $\CAL L^t_\text{profile}$, we conduct an ablation study to investigate the impact of $\CAL L^t_\text{profile}$ on the overall recommendation performance of our propsed \Ours{}. Specifically, we compare the MRRs under various weights $\lambda$ on the Beauty dataset. The results are shown in Figure~\ref{fig:coef}. As can be seen from the figure, as the weight $\lambda$ increases, the recommendation metric MRR also improves. This demonstrates the efficacy of our proposed interest profile loss $\CAL L^t_\text{profile}$ in capturing the complex relationships between items and enhancing the accuracy of recommendations. Notably, when $\lambda=0$ (which means that the interest profile loss is not used), the MRR is significantly lower than when $\lambda>0$. This suggests that the interest profile loss plays a crucial role in improving the recommendation performance of our proposed \Ours{}. Overall, our ablation study demonstrates the importance of the interest profile loss in our proposed \Ours{} and highlights its efficacy in improving the quality of recommendations in real-world applications.


\section{Related Work}


\subsection{Generative Recommendation}\label{ssec:rel-genrec}

Generative recommendation formulates recommendation tasks as sequential generation problems, aiming to predict user preferences by generating sequences of recommended items rather than simply ranking predefined candidates~\citep{deng2025onerec}. Recent studies have employed deep generative models, particularly autoregressive language models, to directly rank candidate items~\citep{hou2024large}, predict user ratings~\citep{gao2023chat}, and retrieve relevant information~\citep{zhu2023large}.
P5~\citep{geng2022recommendation} conceptualizes recommendation as a language processing task, integrating various recommendation scenarios within a unified sequence-to-sequence framework. This approach effectively leverages multiple sources of information, facilitating richer and more nuanced user modeling. However, bridging the gap between natural language processing and discrete user and item indexing remains challenging. \citet{hua2023index} explored several indexing methods, including sequential indexing, collaborative indexing, semantic indexing, and hybrid indexing, to incorporate indexing mechanisms into generative models. More recent studies~\citep{xiao2025progressive,wang2024eager,xu2024slmrec,li2023e4srec} incorporated both semantic and collaborative information into index tokenization. Despite demonstrating promising outcomes, these approaches continue to face difficulties in fully capturing complex user behavioral patterns and detailed item semantic information.

\subsection{Graphs in Recommendation}\label{ssec:rel-graph}
Graph-based recommendation approaches leverage the inherent relational structure among users and items to capture complex connectivity patterns that traditional methods often overlook. 
Recent advances in GNNs have propelled significant progress in this area, allowing models to integrate both structural and feature-based information into embedding learning. 
Notably, Graph Convolutional Matrix Completion (GCMC)~\citep{van2017graph} introduces a graph auto-encoder framework that facilitates differentiable message passing over bipartite user-item graphs.
Neural Graph Collaborative Filtering (NGCF)~\citep{wang2019neural} explored learning user and item embeddings by propagating neighbor information in user-item graphs.  LightGCN~\citep{he2020lightgcn} then enhanced this idea by adopting linear neighbor information aggregation. 
Despite their effectiveness, GNN-based methods typically perceive only a small subgraph centered around individual users or items, limiting their capacity to fully exploit holistic graph information and potentially overlooking valuable global structural insights. 
More recently, researchers have begun integrating LLMs with GNNs to enhance recommendation performance by leveraging the relational modeling capabilities of GNNs and the natural language understanding strengths of LLMs. Existing approaches generally fall into three categories: graph-augmented LLM~\citep{ma2024triple,wang2024enhancing,ma2024xrec}, LLM-augmented graph~\citep{wang2023enhancing,liu2025understanding,jeon2025topic}, and LLM-graph collaboration~\citep{runfeng2023lkpnr,zhao2024dynllm}.
\section{Conclusion \& Future Work}
In this paper, we introduced \Ours{}, a scalable item schema designed to enhance generative sequential recommendation systems by effectively integrating holistic graph-based co-engagement modeling for semantic tokenization. Our approach addresses the limitations of existing methods, which often struggle with scalability and the accurate incorporation of complex user-behavioral and item-semantic information. By constructing an item-item co-engagement graph and employing a soft clustering algorithm, \Ours{} derives interest prototypes that transform user interaction history sequences into the more informative item interest profiles. This transformation allows the recommendation model to capture comprehensive and semantic user interest profiles without relying on predefined ground-truth interests.
Our extensive experiments on public datasets, along with successful online deployment on product surfaces, demonstrate the superiority of \Ours{} over current state-of-the-art methods. The results highlight that \Ours{} have the ability to provide more accurate and holistic modeling of user behavior, ultimately leading to improved recommendation performance in industrial applications. \Ours{} represents a significant step forward in the field of generative sequential recommendation, offering a scalable and effective solution for industry-scale applications. Future work may explore further enhancements to the framework, such as incorporating additional data sources or refining the clustering algorithm to capture even more nuanced interests.


\bibliographystyle{assets/plainnat}
\bibliography{paper}

\beginappendix


\section{Proof of \THMref{preserv-lap}}\label{app:proof-preserv-lap}


\paragraph{Setup and conventions.}
We treat the co-engagement graph as \emph{undirected}. Since every user has $N$ interactions, the co-engagements of each user $u\in\CAL U$ form a complete graph (clique) $K_{\CAL I_u}$ on the $N$ items in $\CAL I_u$, with $\binom N2=\frac{N(N-1)}2$ undirected edges. For an undirected edge $\{i,j\}$, write $b_{ij}:=e_i-e_j\in\BB R^{|\CAL I|}$, where $e_i$ denotes the $i$-th standard basis vector; this edge contributes the rank-one term $b_{ij}b_{ij}\Tp$ to the graph Laplacian. Self-loops $(i,i)$ satisfy $b_{ii}=0$ and thus contribute nothing, so we discard them and identify the sampling domain $\CAL I_u\times\CAL I_u$ with the $\binom N2$ undirected edges of $K_{\CAL I_u}$. Accordingly,
\AM{
L_{\CAL E^*}=\sum_{u\in\CAL U}\sum_{\{i,j\}\subseteq\CAL I_u}b_{ij}b_{ij}\Tp,
\qquad
L_{\CAL E}=\frac{N(N-1)}{2m}\sum_{u\in\CAL U}\sum_{\ell=1}^{m}b_{i^u_\ell j^u_\ell}b_{i^u_\ell j^u_\ell}\Tp,
}
where, for each user $u$, the edges $\{i^u_\ell,j^u_\ell\}_{\ell=1}^m$ are drawn independently and uniformly with replacement from the $\binom N2$ edges of $K_{\CAL I_u}$. The reweighting factor $\frac{N(N-1)}{2m}=\binom N2/m$ rescales the $m$ samples back to the total edge weight $\binom N2$ of each user's clique, which (as we verify below) makes $L_{\CAL E}$ an unbiased estimator of $L_{\CAL E^*}$. Without loss of generality, we assume that $\CAL G^*$ is connected; otherwise, the argument below applies verbatim to each connected component.




\paragraph{Auxiliary lemma.} We will use the following Lemma~\ref{LEM:eff-resist}.

\begin{LEM}[effective resistance]\label{LEM:eff-resist}
Let $L^\dagger_{\CAL E^*}$ denote the Moore--Penrose pseudoinverse of $L_{\CAL E^*}$. For every co-engaged pair $\{i,j\}$ (i.e., $\{i,j\}\subseteq\CAL I_u$ for some $u\in\CAL U$),
\AM{
b_{ij}\Tp L^\dagger_{\CAL E^*}b_{ij}\le\frac2N.
}
\end{LEM}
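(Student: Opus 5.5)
The plan is to combine Rayleigh monotonicity with an explicit computation on a single clique. Fix a co-engaged pair $\{i,j\}\subseteq\CAL I_u$ for some user $u$. Write $L_u:=\sum_{\{k,l\}\subseteq\CAL I_u}b_{kl}b_{kl}\Tp$ for the Laplacian of the clique $K_{\CAL I_u}$, viewed as an $|\CAL I|\times|\CAL I|$ matrix. Since $L_{\CAL E^*}$ is a sum over users of such clique Laplacians, and every summand is positive semidefinite, we have $L_{\CAL E^*}\succeq L_u$. The goal is to turn this Loewner inequality into the inequality $b_{ij}\Tp L^\dagger_{\CAL E^*}b_{ij}\le b_{ij}\Tp L_u^\dagger b_{ij}$, and then to show that the right-hand side equals $2/N$.

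For the comparison step I would use the variational characterization of a quadratic form in the pseudoinverse. For a positive semidefinite $L$ and a vector $b$ in the range of $L$,
\AM{b\Tp L^\dagger b=\sup_{x\in\BB R^{|\CAL I|}}\big(2b\Tp x-x\Tp Lx\big),}
with the supremum attained at $x=L^\dagger b$. To verify it, write $b=Lw$. Then $2b\Tp x-x\Tp Lx = w\Tp L w-(x-w)\Tp L(x-w)$, whose maximum value is $w\Tp Lw=b\Tp L^\dagger b$.

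The formula applies to both matrices:
\begin{itemize}
\item \emph{For $L_{\CAL E^*}$:} by the connectivity assumption from the setup, its range is $\mathbf 1^\perp$, and $b_{ij}\perp\mathbf 1$.
\item \emph{For $L_u$:} its range is the set of vectors supported on $\CAL I_u$ whose entries sum to zero, and $b_{ij}$ is such a vector.
\end{itemize}
Because $x\Tp L_{\CAL E^*}x\ge x\Tp L_u x$ for every $x$, each term in the supremum for $L_{\CAL E^*}$ is at most the corresponding term for $L_u$. Taking suprema gives $b_{ij}\Tp L^\dagger_{\CAL E^*}b_{ij}\le b_{ij}\Tp L_u^\dagger b_{ij}$.

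It remains to compute the clique value. On the coordinates $\CAL I_u$, the matrix $L_u$ equals $NI_N-\mathbf 1\mathbf 1\Tp$, and it vanishes on all other coordinates. This matrix acts as $N$ times the identity on vectors supported on $\CAL I_u$ and orthogonal to $\mathbf 1$. Hence $L_u^\dagger$ acts as $\frac1N I$ on that subspace. Since $b_{ij}$ lies in this subspace, $b_{ij}\Tp L_u^\dagger b_{ij}=\|b_{ij}\|_2^2/N=2/N$, which proves the lemma.

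The main point requiring care is the comparison of pseudoinverses. The relation $A\succeq B$ does not in general imply $B^\dagger\succeq A^\dagger$ when the two ranges differ. The variational formula sidesteps this issue, but it requires checking that $b_{ij}$ lies in the range of both Laplacians, which is what the two cases above establish. If $\CAL G^*$ is disconnected, the same argument applies within the component containing $\CAL I_u$.
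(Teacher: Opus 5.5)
Your proof is correct and takes the same route as the paper: reduce to the single clique on $\mathcal I_u$ by monotonicity, then compute $b_{ij}^\top L_u^\dagger b_{ij}=\|b_{ij}\|_2^2/N=2/N$ from the fact that $NI-\mathbf 1\mathbf 1^\top$ acts as $N$ on $\mathbf 1^\perp$. The only difference is that the paper cites Rayleigh's monotonicity law for the first step, whereas your variational formula actually proves it; together with your check that $b_{ij}$ lies in both ranges, this properly handles the pseudoinverse-ordering subtlety you flag.
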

\begin{proof}[Proof of \LEMref{eff-resist}]
The quantity $b_{ij}\Tp L^\dagger_{\CAL E^*}b_{ij}$ is the effective resistance between items $i$ and $j$ in $\CAL G^*$. The clique $K_{\CAL I_u}$ is a subgraph of $\CAL G^*$, and adding edges (here, the remaining co-engagements) can only decrease effective resistance by Rayleigh's monotonicity law. Hence, it suffices to bound the effective resistance within $K_N=K_{\CAL I_u}$ alone. The Laplacian of $K_N$ is $L_{K_N}=NI-\mathbf1\mathbf1\Tp$ on the $N$ vertices of the clique, whose pseudoinverse acts as $\frac1N$ on the subspace orthogonal to $\mathbf1$. Since $b_{ij}\perp\mathbf1$, we have $b_{ij}\Tp L_{K_N}^\dagger b_{ij}=\frac1N\|b_{ij}\|_2^2=\frac2N$. The claim follows.
\end{proof}

\emph{Theorem proof.} We are now ready to prove \THMref{preserv-lap}.

\begin{proof}[Proof of \THMref{preserv-lap}]
First, fix a user $u$ and a single draw $\{i,j\}$ uniform over the $\binom N2$ edges of $K_{\CAL I_u}$. Then
\AM{
\ExpOp\Big[\tfrac{N(N-1)}{2m}\,b_{ij}b_{ij}\Tp\Big]
=\frac{N(N-1)}{2m}\cdot\frac1{\binom N2}\sum_{\{i,j\}\subseteq\CAL I_u}b_{ij}b_{ij}\Tp
=\frac1m\sum_{\{i,j\}\subseteq\CAL I_u}b_{ij}b_{ij}\Tp.
}
Summing over the $m$ independent draws of user $u$ and then over all users gives $\ExpOp[L_{\CAL E}]=L_{\CAL E^*}$.

Enumerate all $n:=|\CAL U|m$ draws as $\ell=1,\dots,n$, and let $X_\ell:=\frac{N(N-1)}{2m}b_{i_\ell j_\ell}b_{i_\ell j_\ell}\Tp$ be the reweighted contribution of the $\ell$-th draw, so that $L_{\CAL E}=\sum_{\ell=1}^n X_\ell$ and the $X_\ell$ are independent. Define
\AM{
Y_\ell:=(L^\dagger_{\CAL E^*})^{1/2}X_\ell(L^\dagger_{\CAL E^*})^{1/2},
\qquad
\bar Y:=\sum_{\ell=1}^n Y_\ell=(L^\dagger_{\CAL E^*})^{1/2}L_{\CAL E}(L^\dagger_{\CAL E^*})^{1/2}.
}
Let $\Pi:=(L^\dagger_{\CAL E^*})^{1/2}L_{\CAL E^*}(L^\dagger_{\CAL E^*})^{1/2}$ be the orthogonal projection onto $\OP{range}(L_{\CAL E^*})$. By unbiasedness, $\ExpOp[\bar Y]=\Pi$, and $\Pi=I$ on $\OP{range}(L_{\CAL E^*})$.

Write $w:=\frac{N(N-1)}{2m}$. Each $Y_\ell\succeq0$, and by \LEMref{eff-resist},
\AM{
\|Y_\ell\|_2=w\,b_{ij}\Tp L^\dagger_{\CAL E^*}b_{ij}\le\frac{N(N-1)}{2m}\cdot\frac2N=\frac{N-1}m\le\frac Nm=:R.
}

Since $b_{ij}\Tp L^\dagger_{\CAL E^*}b_{ij}\le\frac2N$, each $Y_\ell$ satisfies $Y_\ell^2=(w\,b_{ij}\Tp L^\dagger_{\CAL E^*}b_{ij})\,Y_\ell\preceq\frac Nm Y_\ell$. Taking expectations and summing, and using $\ExpOp[Z_\ell^2]\preceq\ExpOp[Y_\ell^2]$ for the centered matrices $Z_\ell:=Y_\ell-\ExpOp[Y_\ell]$,
\AM{
v:=\Big\|\sum_{\ell=1}^n\ExpOp[Z_\ell^2]\Big\|_2
\le\Big\|\sum_{\ell=1}^n\ExpOp[Y_\ell^2]\Big\|_2
\le\frac Nm\Big\|\sum_{\ell=1}^n\ExpOp[Y_\ell]\Big\|_2
=\frac Nm\|\Pi\|_2=\frac Nm.
}

The matrices $Z_\ell=Y_\ell-\ExpOp[Y_\ell]$ are independent, symmetric, and centered, of size $d=|\CAL I|$, with $\|Z_\ell\|_2\le R=\frac Nm$ and variance proxy $v\le\frac Nm$. By the matrix Bernstein inequality,
\AM{
\Prb\{\|\bar Y-\Pi\|_2\ge\epsilon\}\le2|\CAL I|\exp\!\Big({-\frac{\epsilon^2/2}{\frac Nm+\frac\epsilon3\frac Nm}}\Big).
}

The right-hand side is at most $\delta$ whenever
\AM{
\frac{\epsilon^2/2}{\frac Nm(1+\frac\epsilon3)}\ge\log\frac{2|\CAL I|}\delta
\iff
m\ge\frac{2N(1+\epsilon/3)}{\epsilon^2}\log\frac{2|\CAL I|}\delta
=2N\Big(\frac1{\epsilon^2}+\frac1{3\epsilon}\Big)\log\frac{2|\CAL I|}\delta.
}
The choice $m=\big\lceil2N(\frac1{3\epsilon}+\frac1{\epsilon^2})\log\frac{2|\CAL I|}\delta\big\rceil$ satisfies this, so $\|\bar Y-\Pi\|_2\le\epsilon$ with probability at least $1-\delta$.

Finally, the inequality $\|\bar Y-\Pi\|_2\le\epsilon$ is equivalent to $(1-\epsilon)\Pi\preceq\bar Y\preceq(1+\epsilon)\Pi$. Conjugating by $L_{\CAL E^*}^{1/2}$ and using $L_{\CAL E^*}^{1/2}\Pi L_{\CAL E^*}^{1/2}=L_{\CAL E^*}$ together with $L_{\CAL E^*}^{1/2}\bar Y L_{\CAL E^*}^{1/2}=L_{\CAL E}$ (valid since $\OP{range}(L_{\CAL E})\subseteq\OP{range}(L_{\CAL E^*})$), we obtain
\AM{
(1-\epsilon)L_{\CAL E^*}\preceq L_{\CAL E}\preceq(1+\epsilon)L_{\CAL E^*}.
}

Since every user contributes $m$ sampled edges and $M=|\CAL U|N$,
\AM{
|\CAL E|=|\CAL U|\,m=\OP O\Big(|\CAL U|\,N\log\tfrac{2|\CAL I|}\delta\Big)=\OP O\Big(M\log\tfrac{2|\CAL I|}\delta\Big)=\OP O(M\log M),
}
where we used $|\CAL I|\le M$ and treated $\epsilon,\delta$ as constants. This completes the proof.
\end{proof}

\section{Proof of Proposition~\ref{PRP:Qsoft}}

\begin{proof}[Proof]
By the linearity of expectation,
\AL{&Q_{\text{soft}}(P)={}\ExpOp_{h\sim P}[Q_{\text{hard}}(h)]
\\={}&\ExpOp_{h\sim P}\bigg[\frac{1}{|\CAL E|} \sum_{i,j\in\CAL I} \Big(A_{i,j} - \gamma\frac{k_i k_j}{|\CAL E|}\Big) 1_{[h_i=h_j]}\bigg]
\\={}&
\frac{1}{|\CAL E|} \sum_{i,j\in\CAL I} \Big(A_{i,j} - \gamma\frac{k_i k_j}{|\CAL E|}\Big)\ExpOp_{h\sim P}[1_{[h_i=h_j]}]
\\={}&\frac{1}{2|\CAL E|} \sum_{i,j\in\CAL I} \Big(A_{i,j} - \gamma\frac{k_i k_j}{2|\CAL E|}\Big)\sum_{a=1}^C p_{i,a}p_{j,a}
\\={}&\frac{1}{|\CAL E|} \sum_{i,j\in\CAL I} \Big(A_{i,j} - \gamma\frac{k_i k_j}{|\CAL E|}\Big)p_i^\top p_j
\\={}&
\bigg(\frac{1}{|\CAL E|} \sum_{(i,j)\in\CAL E}p_i^\top p_j\bigg)-\gamma \frac{\|P^\top k\|_2^2}{|\CAL E|^2}\nonumber
.\qedhere
}
\end{proof}

\section{Proof of Proposition~\ref{PRP:complexity-qsoft}}

\begin{proof}[Proof]
Since each row of $P$ has $\le\rho$ nonzero entries, then $\sum_{(i,j)\in\CAL E}p_i\Tp p_j$ needs time $\sum_{(i,j)\in\CAL E}\OP O(\rho)=\OP{\rho|\CAL E|}$. Since $P$ has $\le\rho|\CAL I|$ nonzero entries, then $\|P\Tp k\|_2^2$ needs time $\OP O(\rho|\CAL I|)\le\OP O(\rho|\CAL E|)$. Hence, the conclusion follows from \THMref{preserv-lap}.
\end{proof}

\end{document}